\newcommand{\R}{\mathbb{R}}
\newcommand{\Q}{\mathbb{Q}}
\newcommand{\ve}{\underline e}
\newcommand{\vf}{\underline f}
\newcommand{\vbeta}{{\underline\beta}}
\newcommand{\vsigma}{\underline\sigma}
\newcommand{\eps}{\epsilon}
\newcommand{\si}{\mathfrak{s}}
\newcommand{\comment}[1]{}
\renewcommand{\section}{\@startsection%
{section}% name
{1}% level
{0mm}% indent
{1.5\bigskipamount}% beforeskip
{0.5\bigskipamount}% afterskip
{\centering\normalsize\sc}}% style
\renewcommand{\paragraph}{\@startsection%
{paragraph}% name
{4}% level
{0mm}% indent
{\bigskipamount}% beforeskip
{-1.25ex}% afterskip
{\normalsize\sl}}% style
\def\provedboxcontents#1{$\square$}
\newtheoremstyle{thm}{}{}{\slshape}{}{\scshape}{.}{0.5em}{}
\newtheoremstyle{def}{}{}{}{}{\scshape}{.}{0.5em}{}
\newtheoremstyle{rmk}{}{}{}{}{\scshape}{.}{0.5em}{}
\newtheoremstyle{claim}{}{}{}{}{\slshape}{.}{0.5em}{}
\newtheorem*{thmGW}{Theorem GW} 
\newtheorem{newstatement}{newstatement}
\newtheorem{lemma}[newstatement]{Lemma}
\newtheorem{theorem}[newstatement]{Theorem}
\newtheorem{corollary}[newstatement]{Corollary}
\newtheorem{proposition}[newstatement]{Proposition}
\newtheorem*{conjecture*}{Conjecture}
\newtheorem*{prop*}{Proposition}
\theoremstyle{def}
\theoremstyle{rmk}
\newtheorem{remark}[newstatement]{Remark}
\theoremstyle{claim}
\let\expandafter\oldproof\csname\string\proof\endcsname
\let\oldendproof\endproof
\renewenvironment{proof}[1][\proofname]{%
  \oldproof[\slshape #1]%
}{\oldendproof}
\let\geq\geqslant
\let\leq\leqslant
\let\phi\varphi
\let\epsilon\varepsilon
\renewcommand{\emph}[1]{{\slshape #1}}
\title[Risk aversion and Uniqueness]{Risk aversion and uniqueness of equilibrium: a polynomial approach}
\author{Andrea Loi}
\address{Andrea Loi, Dipartimento di Matematica e Informatica \\
         Universit\`a di Cagliari, Italy.}
         \email{loi@unica.it}
\author{Stefano Matta}
\address{Stefano Matta, Dipartimento di Scienze economiche e Aziendali \\
         Universit\`a di Cagliari, Italy.}
         \email{smatta@unica.it}
\date{}
\thanks{The first author was  supported  by INdAM. GNSAGA - Gruppo Nazionale per le Strutture Algebriche, Geometriche e le loro Applicazioni. Both authors were supported 
by STAGE - Funded by Fondazione di Sardegna.}
\begin{document}
\begin{abstract}
We study the connection between risk aversion, number of consumers and uniqueness of equilibrium.
We consider an economy with two goods and $c$ impatience types, where
each type has additive separable preferences with HARA Bernoulli utility function,
$u_H(x):=\frac{\gamma}{1-\gamma}\left(b+\frac{a}{\gamma}x\right)^{1-\gamma}$.
We show that if $\gamma\in \left(1, \frac{c}{c-1}\right]$, the equilibrium is unique.
Moreover, the methods used, involving Newton's symmetric polynomials and Descartes' rule of signs,
enable us to offer new sufficient conditions for uniqueness in a closed-form expression highlighting 
the role played by endowments, patience and specific HARA parameters.
Finally, new necessary and sufficient conditions in ensuring uniqueness 
are derived for the particular case of CRRA Bernoulli utility functions with $\gamma =3$.

\end{abstract}
\maketitle

\vspace{0.3in}

\noindent\textbf{Keywords:} Uniqueness of equilibrium; Excess demand function; Risk aversion; Polynomial approximation; Descartes' rule of signs; Netwon's symmetric polynomials.

\vspace{0.3in}

\noindent\textbf{JEL Classification:} C62, D51, D58.  %\newpage

\vspace{0.3in}

\section*{Acknowledgement}

\thispagestyle{empty}

We would like to acknowledge Alexis Akira Toda for useful comments and critical remarks.

\section{Introduction and statements of the main results}

Uniqueness plays a crucial role in comparative statics and stability.
Yet, as highlighted by \cite{ke98}, it is rarely possibile to provide easy analytical conditions
that guarantee uniqueness in applied models.
In a recent paper, Geanakoplos and Walsh \cite{gewa}  have presented new sufficient conditions ensuring uniqueness and stability of equilibrium
in an economy with two goods, where $c$ agents, $c\geq 2$,  ordered according to a parameter $\vbeta=(\beta_1,\ldots, \beta_c$), $\beta_1<\cdots <\beta_c$, representing patience, have identical endowments and the same Bernoulli utility function displaying non-increasing absolute risk aversion (DARA).
The role played in that paper by the DARA assumption is twofold: it ensures that income effects can be ordered across types and
determines a positive covariance between consumption and income effect, hence bounding the market income effect.

 For our purposes, their main results  (see Proposition 2 and Proposition 5 in  \cite{gewa} for details)
 can be summarized  as follows.

\begin{thmGW}\label{thmGW}
Let $u$ be a Bernoulli utility function and let impatience type $i$'s preferences be represented by
$$
u_i(x,y)=u(x)+\beta_iu(y),\ i=1, \dots ,c.
$$
Denote by $(e_i, f_i)$ consumer $i$'s endowments of goods $x$ and $y$, respectively.

The price is unique if:
\begin{enumerate}
\item
 $u$ satisfies  DARA and agents have  identical endowments, i.e. $(e_i, f_i)=(e_j, f_j)$ for all $i$ and $j$;
 \item
$u$ satisfies  CRRA and the following restrictions on patience and endowments hold: $e_i\leq e_j$ and $f_i\geq f_j$, for any  $i<j$.
 \end{enumerate}
\end{thmGW}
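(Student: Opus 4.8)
The plan is to reduce uniqueness of the equilibrium price to a one-sided monotonicity condition at the equilibria and then to extract the needed sign from the risk-aversion hypotheses. Normalize the price of good $y$ to $1$ and write $p$ for the price of good $x$. Each type $i$ solves a strictly concave program, so the first-order condition $u'(x_i)=p\,\beta_i\,u'(y_i)$ together with the budget line $p\,x_i+y_i=p\,e_i+f_i$ determines smooth individual demands $x_i(p),y_i(p)$. By Walras' law it suffices to analyse the aggregate excess demand for good $x$, $Z(p):=\sum_{i=1}^c\bigl(x_i(p)-e_i\bigr)$. First I would record the boundary behaviour forced by the HARA/CRRA form, $Z(0^+)>0$ and $Z(+\infty)<0$, so that $Z$ has at least one zero; uniqueness then follows from the standard index argument once I show that $Z'(p^\ast)<0$ at every equilibrium $p^\ast$, since a $C^1$ function with these boundary values that is strictly decreasing at each of its zeros cannot vanish more than once.

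The second step is a Slutsky decomposition of $Z'$. Because type $i$'s income is $I_i=p\,e_i+f_i$, one has $\partial I_i/\partial p=e_i$ and the Slutsky identity reads
\[
\frac{\partial x_i}{\partial p}=s_i+(e_i-x_i)\,\frac{\partial x_i}{\partial I},
\]
where $s_i<0$ is the compensated own-price effect. Summing over types,
\[
Z'(p)=\sum_{i=1}^c s_i+\sum_{i=1}^c (e_i-x_i)\,\frac{\partial x_i}{\partial I}.
\]
The substitution part is strictly negative, so it remains to control the aggregate income effect. At an equilibrium market clearing gives $\sum_{i=1}^c(e_i-x_i)=0$, so the mean net supply vanishes and, for $m$ equal to the mean marginal propensity to consume,
\[
\sum_{i=1}^c (e_i-x_i)\,\frac{\partial x_i}{\partial I}=\sum_{i=1}^c (e_i-x_i)\Bigl(\frac{\partial x_i}{\partial I}-m\Bigr)=c\,\mathrm{Cov}\Bigl(e_i-x_i,\ \frac{\partial x_i}{\partial I}\Bigr).
\]
Thus the whole matter reduces to signing this covariance: the income term is non-positive precisely when consumption $x_i$ and the propensity $\partial x_i/\partial I$ are positively correlated across types.

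For part (1), with identical endowments I would order types by patience. Since $\beta_i$ is increasing, the first-order condition shows that more patient types consume less of good $x$, so $\{x_i\}$ is monotone in $i$. The role of DARA is to make the marginal propensity to consume $\partial x_i/\partial I$ monotone in $i$ in the same direction, so that $\{x_i\}$ and $\{\partial x_i/\partial I\}$ are comonotone; Chebyshev's sum inequality then gives a non-negative covariance between them, hence a non-positive income term and $Z'(p^\ast)<0$. For part (2), CRRA preferences are homothetic, so $x_i(p)=\alpha_i(p)\,I_i$ with $\partial x_i/\partial I=\alpha_i(p)$ independent of income and ordered by $\beta_i$. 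Here the comonotonicity restriction $e_i\le e_j$, $f_i\ge f_j$ for $i<j$ is exactly what makes the net supplies $e_i-x_i$ move opposite to $\alpha_i$, so that $\mathrm{Cov}(e_i-x_i,\alpha_i)\le0$ and again $Z'(p^\ast)<0$.

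The main obstacle is the curvature step concealed in the word ``comonotone'': turning the DARA inequality (that $-u''/u'$ is non-increasing), respectively the CRRA normalization, into the statement that the income effects $\partial x_i/\partial I$ are ordered across types in the direction that forces the covariance to have the right sign. Everything else — the normalization, the boundary estimates, the index argument, and the rewriting of the aggregate income effect as a covariance — is structural, and the real content of the theorem is that risk aversion orders income effects so tightly that the market income effect can never overturn the substitution effect. I expect to reach this by differentiating the first-order conditions implicitly, expressing $\partial x_i/\partial I$ through the absolute risk-aversion coefficient, and then invoking its monotonicity (together with the endowment ordering in part (2)).
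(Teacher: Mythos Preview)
The paper does not prove Theorem~GW. It is stated there only as a summary of Propositions~2 and~5 of Geanakoplos and Walsh (2018) and is cited, not re-derived; the sole overlap is Remark~10, where the authors note that their polynomial method recovers part~(2) in the very special case $c=2$, $\gamma=3$. So there is no ``paper's own proof'' to compare against in the sense you expect.

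That said, your outline is essentially the original Geanakoplos--Walsh argument: normalize, Slutsky-decompose $Z'(p)$, use market clearing to rewrite the aggregate income term as a cross-type covariance, and then sign that covariance via a comonotonicity statement delivered by DARA (part~(1)) or by CRRA homotheticity plus the endowment ordering (part~(2)). The structural pieces---index argument, boundary behaviour, the covariance rewrite---are exactly as you describe. One small slip: you justify the boundary limits by ``the HARA/CRRA form'', but part~(1) is stated for arbitrary DARA Bernoulli utilities, so the boundary behaviour must come from Inada-type conditions on $u$, not from a HARA parametrization. More importantly, the step you flag as the main obstacle really is the crux, and your one-line description for part~(1) (``DARA makes $\partial x_i/\partial I$ monotone in $i$ in the same direction as $x_i$'') is not yet an argument: with identical endowments and $\beta_i$ increasing, $x_i$ is decreasing, but showing that the marginal propensity to consume $x$ is likewise monotone across types requires differentiating the system $u'(x_i)=p\beta_i u'(y_i)$, $px_i+y_i=I$, expressing $\partial x_i/\partial I$ through the absolute risk-aversion coefficients evaluated at $(x_i,y_i)$, and then using DARA \emph{together with} the already-established ordering of the $(x_i,y_i)$ to conclude. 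You gesture at this in your last paragraph, but the actual computation and the direction of the inequalities need to be written out. For part~(2) your intuition is right, but again the claim that the endowment ordering forces $e_i-x_i$ and $\alpha_i$ to be anti-comonotone is not immediate, since $x_i=\alpha_i(pe_i+f_i)$ mixes a decreasing factor $\alpha_i$ with an income $pe_i+f_i$ whose monotonicity in $i$ depends on $p$; the argument in Geanakoplos--Walsh handles this more carefully than your sketch suggests.

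By contrast, the paper's own technique---turning the excess-demand equation into a real polynomial via a rational approximation of $1/\gamma$, rewriting coefficients with elementary symmetric functions, and counting positive roots by Descartes' rule---is entirely orthogonal to your covariance approach. It buys closed-form conditions tied to specific parameter ranges (e.g.\ $\gamma\in(1,\tfrac{c}{c-1}]$) and works without any endowment restriction, but it does not reproduce Theorem~GW in full generality; your route, which is Geanakoplos and Walsh's, trades algebraic explicitness for the broader DARA/CRRA hypotheses.
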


They point out that the assumption of identical endowments, although used in several papers, is rather restrictive
and highlight that, under DARA preferences, there is no evident condition which ensures uniqueness
if the assumption of identical endowments is dropped.
They conjecture that there shouldn't be \lq\lq too much heterogeneity'' across agents to ensure uniqueness,
arguing that heterogeneity should involve a condition on the patience parameter $\vbeta$, endowments
and the particular Bernoulli utility function used.

In this paper we are mainly interested in exploring sufficient conditions on the parameter $\gamma$ 
that guarantees uniqueness of equilibrium without imposing any restriction on endowments, unlike Theorem GW.
Moreover, the methods used enable us to address the above issue of heterogeneity raised by \cite{gewa} 
and to obtain necessary and sufficient conditions for CRRA preferences.

More precisely, we consider an economy with two goods and $c$ impatience types,
where  type  $i$ has preferences represented by the utility function
\begin{equation}\label{addsep}
u_i(x,y)=u_H(x)+\beta_iu_H(y),
\end{equation}
where   $u_H$ is HARA\footnote{We will not consider the well-known case $\gamma=1$, where the Bernoulli utility is logarithmic.}, i.e.
\begin{equation}\label{uH}
u_H(x):=\frac{\gamma}{1-\gamma}\left(b+\frac{a}{\gamma}x\right)^{1-\gamma},\ \gamma>0, \gamma \neq 1, a>0,  b\geq 0.
\end{equation}

Notice that HARA is an important subclass of DARA preferences extensively used in the literature, which
also encompasses the CRRA case by setting $b=0$. 

 Our main result is the following theorem which establishes a connection between uniqueness,
$\gamma$ and the number $c$ of impatience types in the economy.

\begin{theorem}\label{mainteor}
Let $u_H$ be the  Bernoulli utility function \eqref{uH} and let utility of impatience type  $i$  be  given by \eqref{addsep}.
If 
\begin{equation}\label{assgamma}
\gamma\in \left(1, \frac{c}{c-1}\right],
\end{equation}
then the economy has a unique equilibrium.
\end{theorem}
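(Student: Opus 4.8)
The plan is to reduce the equilibrium condition to a single scalar equation in a transformed price variable and then count its positive solutions with a generalized Descartes' rule of signs. Normalizing the price of the second good to $1$ and writing $p$ for the price of the first good, I would first solve each consumer's problem: the first-order conditions for \eqref{addsep} give $u_H'(x_i)=p\beta_i u_H'(y_i)$, and since $u_H'(x)=a\,(b+\tfrac{a}{\gamma}x)^{-\gamma}$ this reads $(\xi_i/\eta_i)^{-\gamma}=p\beta_i$, where $\xi_i:=b+\tfrac{a}{\gamma}x_i$ and $\eta_i:=b+\tfrac{a}{\gamma}y_i$. Setting $w:=p^{-1/\gamma}$ and $\mu_i:=\beta_i^{-1/\gamma}$ this becomes $\xi_i=w\mu_i\eta_i$, and combining with the budget identity $p\xi_i+\eta_i=pA_i+B_i$ (where $A_i:=b+\tfrac{a}{\gamma}e_i>0$ and $B_i:=b+\tfrac{a}{\gamma}f_i>0$) yields the closed form $\eta_i=\dfrac{A_i+B_i w^{\gamma}}{w^{\gamma}+w\mu_i}$.

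Next I would impose market clearing. Clearing the second good gives $\sum_i\eta_i=\sum_i B_i$, and substituting $s:=w^{\gamma-1}$ (so that $w^\gamma=ws$ and each denominator factors as $w(s+\mu_i)$) the equation collapses, after multiplying by $\prod_i(s+\mu_i)$, to
\[
P(s)=w\,S(s),\qquad P(s):=\sum_i A_i\!\prod_{j\neq i}(s+\mu_j),\quad S(s):=\sum_i B_i\mu_i\!\prod_{j\neq i}(s+\mu_j),
\]
with $w=s^{1/(\gamma-1)}$. Here $P$ and $S$ are polynomials of degree $c-1$ whose coefficients, expanded through Newton's symmetric functions of the $\mu_j$, are strictly positive. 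Since $w\mapsto p$ and $w\mapsto s$ are monotone bijections of $(0,\infty)$, equilibria are in bijection with the positive roots of
\[
F(s):=s^{\delta}S(s)-P(s),\qquad \delta:=\tfrac{1}{\gamma-1}>0.
\]

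The crux is to show $F$ has exactly one positive root, and this is where the hypothesis enters: $\gamma\le \frac{c}{c-1}$ is equivalent to $\delta\ge c-1$. I would view $F$ as a generalized polynomial $\sum_k S_k s^{\delta+k}-\sum_k P_k s^{k}$ (for $0\le k\le c-1$) and order its monomials by exponent. Because $\delta\ge c-1=\deg P$, every exponent carrying a positive coefficient ($\delta,\dots,\delta+c-1$) is at least as large as every exponent carrying a negative coefficient ($0,\dots,c-1$); the two blocks meet, at worst, only at the single exponent $c-1$ in the boundary case $\delta=c-1$. Hence the coefficient sequence of $F$, read in order of increasing exponent, has exactly one sign change (a run of negatives followed by a run of positives), even in the boundary case, where the shared coefficient $S_0-P_{c-1}$ cannot create a second change. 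By the generalized Descartes' rule of signs, $F$ has at most one positive root; since $F(0^+)=-P(0)<0$ and $F(s)\to+\infty$ as $s\to\infty$, it has at least one. Uniqueness follows.

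The main obstacle I anticipate is twofold: carrying out the algebraic reduction to the compact identity $P(s)=wS(s)$ cleanly (in particular verifying the cancellation $\sum_i B_i-\sum_i \tfrac{sB_i}{s+\mu_i}=\sum_i\tfrac{B_i\mu_i}{s+\mu_i}$ that produces $S$), and invoking Descartes' rule in the generalized form valid for real, non-integer exponents $\delta+k$. The sign-counting itself is transparent once the exponents are known not to interleave, and it is precisely the numerical gap $\delta\ge c-1$ — i.e. the bound $\gamma\le \frac{c}{c-1}$ — that prevents interleaving and hence any extra sign change.
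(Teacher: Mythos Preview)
Your argument is correct and follows the same overall strategy as the paper---reduce equilibrium to a single scalar equation and count positive roots via a Descartes-type sign rule---but it is executed more economically in two respects.

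First, the paper reaches its transformed equation through a lengthy symmetric-polynomial rewriting of the excess demand (Proposition~\ref{mainprop}) and then needs an inductive lemma (Lemma~\ref{lemmafond}) to verify that the coefficients $\xi_t=r_x(c)\si(t,c)-\sum_i e_i\si(t,c,i)$ are positive. Your substitutions $w=p^{-1/\gamma}$, $s=w^{\gamma-1}$, together with clearing the \emph{second}-good market, produce coefficients that are visibly positive: each $P_k$ and $S_k$ is a sum of products of the strictly positive quantities $A_i=b+\tfrac{a}{\gamma}e_i$, $B_i\mu_i$, and elementary symmetric functions of the $\mu_j$. No induction is needed.

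Second, the paper does not apply Descartes' rule directly to the expression with real exponents; instead it approximates $\epsilon=1/\gamma$ by a rational $m/n$, argues by transversality that the number of regular equilibria cannot drop (Lemma~\ref{lemmac}), and only then substitutes $q=p^{1/n}$ to obtain an honest polynomial. You bypass this by invoking the generalized Descartes rule for sums $\sum a_k x^{\alpha_k}$ with real exponents. That rule is standard (it follows from Rolle's theorem after dividing out the lowest power and iterating), so this is a legitimate shortcut; your proposal is right to flag it as the one step that deserves an explicit citation or a two-line proof.

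What the paper's heavier setup buys is reusability: once Proposition~\ref{mainprop} is in place, specializing to $c=2$ and $\gamma=3$ immediately yields the explicit cubic used in Theorems~\ref{mainteor2} and~\ref{mainteor3}. Your streamlined route proves Theorem~\ref{mainteor} more directly, and can be adapted to those later results (via $t=s^{1/2}$ one again lands on a cubic), but the adaptation is not entirely free.
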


This result is also made interesting by the fact that
\lq\lq utility function concavity parameters in the range of $1$ to $2$ are widely considered plausible in the literature",
as Kaplow observes (see \cite[Chapter 3]{kap} and references therein).

For CRRA preferences this links the coefficient of relative risk aversion $\gamma$ to the number of consumers
and also answers the issue raised by \cite{towa}, i.e. whether a value of $\gamma$ in the interval
$(1,2]$ is compatible with multiple equilibria for CRRA preferences with $2$ consumers.
Moreover, it provides a generalization for any number of consumers and
arbitrary endowments allocations.

In fact, it is known (see \cite{msc91} and references therein; see also \cite{helo} and \cite{msc95}) that for $C^2$ separable preferences $\sum_l u_l(x)$,
where each $u_l$ is a monotonic and concave, relative risk aversion less or equal to $1$ implies uniqueness. In particular, 
under CRRA preferences, i.e. when $b=0$, 
if $\gamma$ belongs to $(0,1)$, we have uniqueness. 
Toda and Walsh \cite{towa} show for CRRA preferences the existence of multiple equilibria
in an economy with two goods and two consumers when $\gamma>2$. Whether multiplicity is possibile or not
for $1<\gamma\leq 2$ was left an open question. An immediate consequence of Theorem \ref{mainteor}, see Corollary
\ref{todaq}, rules out this possibility. Moreover, it highlights the interplay between relative risk aversion and the number of consumers.
An equivalent result also holds for HARA preferences (see Corollary \ref{gewaq}).

As far as the heterogeneity issue raised by \cite{gewa} is concerned, 
our contribution is the following theorem.

\begin{theorem}\label{mainteor2}
In the HARA case with two impatience types  and for $\gamma$
sufficiently close to $3$, we have at most three equilibria. Moreover, if we  assume that
\begin{equation}\label{prop5}
\beta_1<\beta_2, \  e_1\leq e_2, \ f_1\geq f_2,
\end{equation}
and
\begin{equation}\label{bmag}
b\geq\frac{a}{3}\left(\frac{\beta_2}{\beta_1}\right)^{\frac{2}{3}}\left(e_2+f_1\right)
\end{equation}
are satisfied, then the  equilibrium  is unique.
\end{theorem}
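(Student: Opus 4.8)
The plan is to reduce the two-type economy at $\gamma=3$ to a single cubic, count its positive roots with Descartes' rule, and then transfer the conclusion to nearby $\gamma$ by perturbation. First I would set up the market-clearing equation. With prices normalised so that good $y$ has price $1$ and good $x$ has price $p>0$, the first-order conditions for \eqref{addsep}--\eqref{uH} give $u_H'(x_i)=p\,\beta_i u_H'(y_i)$; since $u_H'(t)=a\big(b+\tfrac a\gamma t\big)^{-\gamma}$, this is equivalent to $b+\tfrac a\gamma x_i=(p\beta_i)^{-1/\gamma}\big(b+\tfrac a\gamma y_i\big)$. Feeding this into the budget constraint $px_i+y_i=pe_i+f_i$ and clearing, say, the market for good $y$ produces an equation $z_1(p)+z_2(p)=\text{const}$, where
\[
z_i(p)=\frac{\tfrac a\gamma(pe_i+f_i)+b(p+1)}{p(p\beta_i)^{-1/\gamma}+1}.
\]

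At $\gamma=3$ I would substitute $s=p^{1/3}$, a bijection of $(0,\infty)$ onto itself, under which $p(p\beta_i)^{-1/3}=\mu_i s^2$ with $\mu_i:=\beta_i^{-1/3}$. Each $z_i$ then becomes a rational function of $s$ with strictly positive denominator $\mu_i s^2+1$, so clearing denominators and cancelling the factor $s^2$ (nonzero for $s>0$) reduces the market-clearing equation to a cubic $P(s)=c_3s^3+c_2s^2+c_1s+c_0$ with $c_3=A_1\mu_2+A_2\mu_1$, $c_2=-(B_1+B_2)\mu_1\mu_2$, $c_1=A_1+A_2$, $c_0=-(B_1\mu_1+B_2\mu_2)$, where $A_i:=\tfrac a3 e_i+b>0$ and $B_i:=\tfrac a3 f_i+b>0$. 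Positive roots of $P$ correspond bijectively to equilibrium prices via $p=s^3$, and the sign pattern $(+,-,+,-)$ of $(c_3,c_2,c_1,c_0)$ yields, by Descartes' rule of signs, at most three positive roots. This gives the bound of three equilibria at $\gamma=3$.

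For the uniqueness statement I would examine the shape of $P$ on $(0,\infty)$. Here $c_3>0$, $c_0<0$ and $P'(0)=c_1>0$, so both critical points of $P$, when real, are positive and $P$ rises to a local maximum at the smaller one $s_-$ before dipping to a local minimum; since $P(0)=c_0<0$, three positive roots can occur only if $P(s_-)>0$ while the local minimum is negative. Thus it suffices to rule this out, and I would do so by showing that \eqref{prop5} and \eqref{bmag} force $P(s_-)\le 0$. Condition \eqref{prop5} supplies the orderings $\mu_1>\mu_2$ together with $e_1\le e_2$, $f_1\ge f_2$, and the right-hand side of \eqref{bmag} is exactly $\tfrac a3(\mu_1/\mu_2)^2(e_2+f_1)$; the lower bound it places on $b$ makes the constant term $|c_0|=B_1\mu_1+B_2\mu_2$ large enough, relative to the size of the intermediate coefficients, to keep the local maximum at or below zero. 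Then $P$ has a single sign change on $(0,\infty)$, hence a unique positive root and a unique equilibrium at $\gamma=3$.

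It remains to pass from $\gamma=3$ to $\gamma$ close to $3$, which I expect to be the main obstacle, since the polynomial reduction is available only at $\gamma=3$. I would treat the excess demand $Z(p;\gamma)$ as a real-analytic function of $(p,\gamma)$ whose zeros at $\gamma=3$ are the (at most three, respectively unique) positive roots of $P$. On every compact price range away from these roots $Z(\cdot\,;3)$ is bounded away from $0$, and as $p\to0^+$ and $p\to\infty$ the excess demand has fixed signs; by uniform continuity in $\gamma$ these properties persist in a neighbourhood of $3$, so no new zeros can emerge in the bulk or escape to the boundary, while near each root of $P$ the number of perturbed zeros is controlled by its multiplicity. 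This keeps the count at most three in general, and exactly one under \eqref{prop5}--\eqref{bmag}, where the unique root is transversal and therefore persists uniquely. The quantitative core of the whole argument is the verification that \eqref{bmag} implies $P(s_-)\le0$, which after expressing $s_-$ and $P(s_-)$ through the coefficients reduces to a one-variable estimate.
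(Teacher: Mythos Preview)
Your reduction to a cubic in $s=p^{1/3}$ at $\gamma=3$ and the Descartes bound for ``at most three equilibria'' match the paper exactly (up to the cosmetic choice $\mu_i=\beta_i^{-1/3}$ versus the paper's $\sigma_i=\beta_i^{1/3}$). Your perturbation argument to pass to nearby $\gamma$ is also essentially the same transversality idea the paper packages in its Lemma~\ref{lemmac}.

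The substantive gap is in the uniqueness step. You propose to show that \eqref{prop5}--\eqref{bmag} force $P(s_-)\le 0$ at the local maximum, but you do not carry this out; you only assert that the lower bound on $b$ makes $|c_0|$ ``large enough''. Two remarks. First, $P(s_-)=0$ would give a double root plus a further simple root, i.e.\ two equilibrium prices, so you actually need the strict inequality $P(s_-)<0$. Second, expressing $s_-$ and $P(s_-)$ through the coefficients and bounding the result is possible but messy, and you have not indicated how \eqref{bmag} enters that estimate concretely.

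The paper sidesteps this entirely with a short algebraic criterion: for a cubic $Ax^3+Bx^2+Cx+D$ with three sign changes, the single inequality $AD-BC<0$ forces the discriminant $\Delta<0$, hence exactly one real root (Lemma~\ref{lemmapol}). One then computes $AD-BC$ directly for the cubic \eqref{polic=2} and finds that it splits as
\[
(\sigma_2-\sigma_1)(e_1f_2\sigma_1-e_2f_1\sigma_2)\;+\;E(\ve,\vf,\vsigma,a,b),
\]
where the first summand is strictly negative under \eqref{prop5} alone, and $E\le 0$ is exactly what \eqref{bmag} delivers after an elementary bound. No critical point of $P$ is ever computed. This discriminant route is the missing idea: it turns the ``one-variable estimate'' you allude to into a two-line check and makes transparent why the specific constant $\tfrac{a}{3}(\beta_2/\beta_1)^{2/3}(e_2+f_1)$ appears in \eqref{bmag}.
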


Condition \eqref{bmag} confirms what has been claimed by \cite{gewa}.
It represents a closed-form expression that embodies what \cite{gewa} argue about for arbitrary DARA preferences, highlighting the role played by endowments, patience, and specific Bernoulli utility parameters in ensuring uniqueness. In particular, the parameters involved $a,b,\gamma$ affect risk tolerance, i.e. the inverse of absolute risk aversion. Following \cite{gewa}'s insight, this closed-form expression ensures the effect of positive covariance across types.
Moreover, it allows for more heterogeneity in allocations than might be expected.

Since we believe that the approach used is as interesting as the results, we provide an intuition here, leaving the details in Section \ref{harasec}.	
As usual, from the maximization problem, one obtains the excess demand function $Z$, expressed as an implicit function of the price raised to a positive function
depending on the parameter $\gamma$. The strategy is to turn $Z$, by algebraic manipulations, into a polynomial $P$. Since $\gamma$ can be a real number, we use the density of $\Q$
in $\R$ to approximate $\gamma$. This strategy \lq\lq generically'' works for regular values for a transversality argument.
Finally, we write $P$ in terms of  Newton's symmetric polynomials and apply Descartes' rule of signs, which states that the number of positive roots
of a real polynomial, arranged in ascending or
descending powers, cannot exceed the number of sign
variations in consecutive coefficients (see, e.g., \cite{ajs}).

It is remarkable how this simple method can be very useful and powerful. In fact, it allows to study 
the number of equilibria without ad hoc restrictions on the set of parameters and, moreover,
it can provide sufficient and necessary conditions for uniqueness as we show
in Theorem \ref{mainteor3}, which establishes a connection between the number of equilibria, 
their type (regular and critical), endowments and utility weights, thus complementing 
\cite{towa}'s analysis of a CRRA economy with relative risk aversion $\gamma=3$.

The literature on uniqueness is vast. For  a survey,
the reader is primarily referred to \cite{ke98, msc91}, and references therein.
As we have already pointed out, a feature of Theorem \ref{mainteor} is that we do not impose restrictions on endowments.
From this point of view, it is related to the strand of the literature 
which provides sufficient 
conditions that guarantee uniqueness globally, i.e. for every possible allocation of resources among consumers. 
Two recent papers that belong to this line of research are \cite{lm18,lm21}, 
where uniqueness is globally characterized through geometric properties of the equilibrium manifold.
For a different approach, which provides sufficient conditions on offer curves 
in a two-commodity, two-agent exchange economy, see \cite{gime}.
We finally refer the reader to \cite{krs}, and references therein,
for a survey on how to solve economic equilibria described as 
solutions to systems of polynomial equations.

We believe that the uniqueness issue, and the methods used in this paper, can deserve further attention and research. 
In particular, we think about extending Theorem \ref{mainteor2} to an arbitrary number of consumers  without imposing restrictions on $\gamma$.
Moreover, it could be interesting to apply these methods to different Bernoulli utility functions, in order to achieve new sufficient conditions on uniqueness.

This paper is organized as follows. In Section \ref{harasec} we describe the economic setting for HARA preferences, the methods used and we prove Theorem 
\ref{mainteor}. Section \ref{crrasec} applies our approach to the particular case of $c=2$ consumer types (Corollary \ref{gewaq} and \ref{todaq}) 
and provides new sufficient (Theorem \ref{mainteor2}) and necessary and sufficient conditions for uniqueness (Theorem \ref{mainteor3}),
for HARA and CRRA preferences, respectively.

\section{HARA Preferences}\label{harasec}

We consider a pure exchange economy with two goods and an arbitrary (finite) 
number ($c$) of impatience types, where  type $i$ has preferences represented by \eqref{addsep}.

Consumer $i$'s endowments is denoted by $(e_i,f_i)$ . We have $\sum_{i=1}^ce_i=r_x(c)$ and $\sum_{i=1}^cf_i=r_y(c)$,
where $(r_x(c),r_y(c))$ is the total resources vector.

Under the budget constraint $pe_i+f_i\leq px+y$, 
consumer $i$'s maximization problem, for $i=1,\ldots, c$, leads to the aggregate excess demand function for good $x$:

\begin{equation}\label{z1}
Z\left(\ve, \vf,p,\epsilon,\vsigma,a,  b, c\right):=\sum_{i=1}^c\frac{b-bp^\epsilon\sigma_i+a\epsilon\left(pe_i+f_i\right)}{a\epsilon\left(p+\sigma_i p^\epsilon \right)}-r_x(c),
\end{equation}
where we set 
$$\epsilon:=\frac{1}{\gamma}, \ \sigma_i:=\beta_i^{\epsilon}, \ i=1,\dots,c$$
and we denote
$$\ve=\left(e_1,\dots, e_c\right),\vf=\left(f_1,\dots, f_c\right), \vsigma=\left(\sigma_1,\dots, \sigma_c\right).$$

In order to make algebraic manipulations of expression \eqref{z1} easier (see Proposition \ref{mainprop} below), we will exploit 
the presence of symmetric polynomials within \eqref{z1} via Newton's identities. 
The interested reader may refer to \cite{edwa}, although this paper is self-contained.

For each integer $t$ such that $1\leq t\leq c$ and  $i=1,\dots,c$ set
\begin{equation}\label{sigma}
\si\left(t, c\right):=\sum_{i_1<\cdots <i_t}^{c} \sigma_{i_1\dots i_t},  \  i_l=1, \dots, c, \ l=1, \dots ,t,
\end{equation}
where 
$$\sigma_{i_1\dots i_t}=\sigma_{i_1}\cdots \sigma_{i_t}.$$
Moreover, set
\begin{equation}\label{zerobis}
\si(c, c, j)=0
\end{equation}
and for $1\leq t <c$ 
\begin{equation}\label{sigmarid}
\si (t, c, i):=\sum_{i_1(i)<\cdots <i_t(i)}^c \sigma_{i_1(i)\dots i_t(i)},  \  i_l(i)=1, \dots \hat i, \dots c, \ l=1, \dots ,t,
\end{equation}
where $\hat i$ means that the index $i$ is omitted.

 Further,   set 
\begin{equation}\label{zero}
\si (0, c)=\si (0, c, i)=1 
\end{equation}
and
\begin{equation}\label{zerogood}
\si (c, c-1)=\si(c, c-1, i)=0 .
\end{equation}
Then it is not hard to see that   the following equalities hold true:
\begin{equation}\label{sigma1}
\si(t, c)=\si (t, c-1)+\sigma_c\si(t-1, c-1),
\end{equation}
\begin{equation}\label{sigma2}
\si(t, c, i)=\si(t, c-1, i)+\sigma_c\si(t-1, c-1, i),
\end{equation}
\begin{equation}\label{sigma3}
\si (t, c, c)=\si(t, c-1),
\end{equation}
for all $c\geq 2$,  $1\leq t\leq c$ and $i=1,\dots, c$.

\begin{proposition}\label{mainprop}
The zeros set  of the aggregate excess demand function (\ref{z1}) equals 
that  of the following function:
\begin{equation}\label{superz}
\begin{split}
z(\ve, \vf,p,\epsilon,\vsigma,a,  b, c):=-\sigma_{1\dots c}\left(r_x(c)+\frac{cb}{a\epsilon}\right)p^{c(\eps - 1)+1}+\\
 -\sum_{t=1}^{c-1}\left[\xi(\ve, \vsigma,  c,  t)+u(\vsigma, b, c,  t)\right]p^{t(\epsilon-1)+1}\\
+\sum_{t=1}^{c-1}v(\vf, \vsigma, b, c,  t)p^{t(\epsilon-1)}+r_y(c)+\frac{cb}{a\epsilon}
\end{split}
\end{equation}

where
\begin{equation}\label{xi}
\xi(\ve, \vsigma,  c, t):=\left[r_x(c)\si (t, c) -\sum_{i=1}^{c} e_i\si (t, c, i)\right],
\end{equation}
\begin{equation}\label{u}
u(\vsigma, a, b, c,  t):=\frac{b}{a\epsilon}\sum_{i=1}^{c}\sigma_i\si (t-1, c, i),
\end{equation}
\begin{equation}\label{v}
v(\vf, \vsigma, a,  b, c, t):= \sum_{i=1}^{c}\left(f_i+\frac{b}{a\epsilon}\right)\si(t, c,i).
\end{equation}
\end{proposition}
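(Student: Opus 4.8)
The plan is to clear denominators in \eqref{z1} and then recognize the resulting numerator as $z$ up to a factor that cannot vanish for admissible (positive) prices. Writing $D_i:=p+\sigma_i p^\epsilon$ and $N_i:=b-b\sigma_i p^\epsilon+a\epsilon(pe_i+f_i)$, so that $Z=\sum_{i=1}^c\frac{N_i}{a\epsilon D_i}-r_x(c)$, I would put the sum over the common denominator $a\epsilon\prod_{j=1}^cD_j$, obtaining $Z=\mathcal{N}/(a\epsilon\prod_{j=1}^cD_j)$ with
$$\mathcal{N}:=\sum_{i=1}^cN_i\prod_{j\neq i}D_j-a\epsilon\,r_x(c)\prod_{j=1}^cD_j.$$
Since $p>0$ and each $\sigma_i>0$, the denominator $a\epsilon\prod_jD_j$ is strictly positive, so the zeros of $Z$ are exactly the zeros of $\mathcal{N}$. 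It therefore suffices to show that $z$ and $\mathcal{N}$ differ by a factor positive for $p>0$; concretely I expect the identity $z=\mathcal{N}/(a\epsilon\,p^{c-1})$.

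Next I would expand the two products. Factoring $D_j=p^\epsilon(p^{1-\epsilon}+\sigma_j)$ and setting $s:=p^{1-\epsilon}$, the definition \eqref{sigma} of the elementary symmetric functions gives $\prod_{j=1}^cD_j=p^{c\epsilon}\prod_j(s+\sigma_j)=\sum_{t=0}^{c}\si(t,c)\,p^{\,c+t(\epsilon-1)}$, and likewise $\prod_{j\neq i}D_j=\sum_{t=0}^{c-1}\si(t,c,i)\,p^{\,(c-1)+t(\epsilon-1)}$ by \eqref{sigmarid}. Substituting these together with $N_i=(b+a\epsilon f_i)-b\sigma_i p^\epsilon+a\epsilon e_i p$ into $\mathcal{N}$, dividing by $a\epsilon\,p^{c-1}$, and collecting the resulting monomials $p^{t(\epsilon-1)}$ and $p^{t(\epsilon-1)+1}$ is then a matter of bookkeeping.

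The decisive step is to match each collected coefficient with \eqref{xi}--\eqref{v}. I would group the three pieces of $N_i$: the part $(b+a\epsilon f_i)$ contributes, after division by $a\epsilon p^{c-1}$, exactly $\sum_i(f_i+\frac{b}{a\epsilon})\si(t,c,i)$ to $p^{t(\epsilon-1)}$, which is $v$ for $1\le t\le c-1$ and gives the constant $r_y(c)+\frac{cb}{a\epsilon}$ at $t=0$; the part $a\epsilon e_i p$ together with $-a\epsilon r_x(c)\prod_jD_j$ contributes $\sum_i e_i\si(t,c,i)-r_x(c)\si(t,c)=-\xi$ to $p^{t(\epsilon-1)+1}$ (and the two spurious $p^1$ terms cancel, since $\si(0,c,i)=\si(0,c)=1$); finally the part $-b\sigma_i p^\epsilon$, after the shift $p^\epsilon\cdot p^{t(\epsilon-1)}=p^{(t+1)(\epsilon-1)+1}$, contributes $-\frac{b}{a\epsilon}\sum_i\sigma_i\si(t-1,c,i)=-u$ to $p^{t(\epsilon-1)+1}$. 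The leading power $p^{c(\epsilon-1)+1}$ collects $-r_x(c)\si(c,c)$ and the shifted $b$-term at $t=c$, where the identity $\sum_i\sigma_i\si(c-1,c,i)=c\,\sigma_{1\dots c}$ (each summand equals $\sigma_{1\dots c}$) produces the coefficient $-\sigma_{1\dots c}(r_x(c)+\frac{cb}{a\epsilon})$. This reproduces \eqref{superz} term by term.

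The only genuine obstacle is the careful tracking of exponents under the substitution $s=p^{1-\epsilon}$ and the index shift caused by the factor $p^\epsilon$ in the $b\sigma_i$-term, which is what re-aligns powers of the form $\epsilon+t(\epsilon-1)$ with $(t+1)(\epsilon-1)+1$; the recursions \eqref{sigma1}--\eqref{sigma3} make this transparent and, if one prefers, allow the whole identity to be established by induction on $c$, passing from the economy with types $1,\dots,c-1$ to the one with the added type $c$. Once the coefficient identification is complete, the positivity of $a\epsilon\,p^{c-1}$ for $p>0$ yields the claimed equality of zero sets.
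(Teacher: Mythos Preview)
Your proposal is correct and follows essentially the same route as the paper: clear denominators, expand $\prod_j D_j$ and $\prod_{j\neq i}D_j$ via the elementary symmetric functions $\si(t,c)$ and $\si(t,c,i)$, then group by powers of $p$ and perform the index shift $t\mapsto t+1$ on the $-b\sigma_i p^\epsilon$ piece (using $\sum_i\sigma_i\si(c-1,c,i)=c\,\sigma_{1\dots c}$ for the leading term). The only cosmetic difference is that the paper factors $D_j=p(1+\sigma_j p^{\epsilon-1})$ rather than $D_j=p^\epsilon(p^{1-\epsilon}+\sigma_j)$, which amounts to the same computation; your explicit identification $z=\mathcal{N}/(a\epsilon\,p^{c-1})$ is exactly the paper's passage from $Z$ to $pZ\cdot\prod_i(1+\sigma_i p^{\epsilon-1})$.
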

\begin{proof}
At an equilibrium price $p$ the function \eqref{z1} 

$$
-pr_x(c)+\sum_{i=1}^c\frac{pe_i+f_i+\frac{b}{a\epsilon}-\frac{b}{a\epsilon}\sigma_i p^\eps}{1+\sigma_ip^{\epsilon-1}} ,$$

or equivalently 

$$- p r_x(c) \prod_{i=1}^c\left(1+\sigma_ip^{\epsilon-1}\right)+\sum_{i=1}^{c}\left[\left(pe_i+f_i+\frac{b}{a\epsilon}-\frac{b}{a\epsilon}\sigma_ip^\eps\right)\prod_{\hat i=1}^c\left(1+\sigma_{\hat i}p^{\epsilon-1}\right)\right],$$
where  $\hat i$ means that the index $i$ is omitted, vanishes.

By Equations (\ref{sigma}) and (\ref{sigmarid}), we can write the products above as follows

$$\prod_{i=1}^c(1+\sigma_ip^{\epsilon-1})=1+\sum_{t=1}^{c-1}\si(t,c)p^{t(\epsilon-1)}+\sigma_{1\ldots c}p^{c(\epsilon-1)}$$

$$\prod_{\hat i=1}^c(1+\sigma_{\hat i}p^{\epsilon-1})=1+\sum_{t=1}^{c-1}\si(t,c,i)p^{t(\epsilon-1)}$$

and rewrite the expression accordingly:

$$-pr_x(c)\left[1+\sum_{t=1}^{c-1}\si(t,c)p^{t(\epsilon-1)}+\sigma_{1\ldots c}p^{c(\epsilon-1)}\right]+\sum_{i=1}^{c}\left(pe_i+f_i+\frac{b}{a\epsilon}-\frac{b}{a\epsilon}\sigma_ip^\epsilon\right)\left[1+\sum_{t=1}^{c-1}\si(t,c,i)p^{t(\epsilon-1)}\right].$$

By expanding and rearranging, we immediately  get

\begin{align*}
 &-r_x(c)\sigma_{1\ldots c}p^{c(\epsilon-1)+1}-\sum_{t=1}^{c-1}\left[r_x(c)\si(t,c)-\sum_{i=1}^{c}e_i\si(t,c,i)\right]p^{t(\epsilon-1)+1}+\sum_{t=1}^{c-1}\sum_{i=1}^{c}(f_i+\frac{b}{a\epsilon})\si(t,c,i)p^{t(\epsilon-1)}\\
&-\frac{b}{a\epsilon}(\sum_{i=1}^c\sigma_i)p^\epsilon
-\frac{b}{a\epsilon}\sum_{t=1}^{c-1}\sum_{i=1}^{c}\sigma_i\si(t,c,i)p^{t(\epsilon-1)+\epsilon}+r_y(c)+\frac{cb}{a\epsilon}.
\end{align*}

Notice now that by the change of index $u:=t+1$, one gets

\begin{align*}
&-\frac{b}{a\epsilon}\sum_{t=1}^{c-1}\sum_{i=1}^{c}\sigma_i\si(t,c,i)p^{t(\epsilon-1)+\epsilon}=-\frac{b}{a\epsilon}\sum_{t=1}^{c-1}\sum_{i=1}^{c}\sigma_i\si(t,c,i)p^{(t+1)(\epsilon-1)+1}\\
&=-\frac{b}{a\epsilon}\sum_{u=2}^{c}\sum_{i=1}^{c}\sigma_i\si(u-1,c,i)p^{u(\epsilon-1)+1}=
\frac{b}{a\epsilon}\sum_{i=1}^{c}\sigma_i\si(0,c,i)p^{(\epsilon-1)+1}\\
&-\frac{b}{a\epsilon}\sum_{t=1}^{c-1}\sum_{i=1}^{c}\sigma_i\si(t-1,c,i)p^{t(\epsilon-1)+1}-\frac{b}{a\epsilon}\sum_{i=1}^{c}\sigma_i\si(c-1,c,i)p^{c(\epsilon-1)+1}\\
&=\frac{b}{a\epsilon}(\sum_{i=1}^c\sigma_i)p^\epsilon-\frac{b}{a\epsilon}\sum_{t=1}^{c-1}\sum_{i=1}^{c}\sigma_i\si(t-1,c,i)p^{t(\epsilon-1)+1}-\frac{cb}{a\epsilon}\sigma_{1\ldots c}p^{c(\epsilon-1)+1},
\end{align*}

where in the last equality we use \eqref{zero} and $\sum_{i=1}^{c}\sigma_i\si(c-1,c,i)=c\sigma_{1\ldots c}$, where $\sigma_{1\ldots c}=\sigma_{1}\cdots \sigma_{c}$.

By inserting this last equality into the previous expression one gets
\begin{align*}
&-\sigma_{1\dots c}\left(r_x(c)+\frac{cb}{a\epsilon}\right)p^{c(\eps - 1)+1}-\sum_{t=1}^{c-1}\left[r_x(c)\si(t,c)-\sum_{i=1}^{c}e_i\si(t,c,i)+\frac{b}{a\epsilon}\sum_{i=1}^{c}\sigma_i\si(t-1,c,i)\right]p^{t(\epsilon-1)+1}\\
&+\sum_{t=1}^{c-1}\sum_{i=1}^{c}\left(f_i+\frac{b}{a\epsilon}\right)\si(t,c,i)p^{t(\epsilon-1)}+r_y(c)+\frac{cb}{a\epsilon},
\end{align*}
and the proposition follows.
\end{proof}

The following two lemmata are needed to prove Theorem \ref{mainteor}.

\begin{lemma}\label{lemmafond}
Set
\begin{equation}\label{infond}
F(t, c):=r_x(c)\si (t, c)-\sum_{i=1}^ce_i\si (t, c, i).
\end{equation}
Then $F(t, c)>0$
 for each integer $t$ such that $1\leq t\leq c-1$. 
\end{lemma}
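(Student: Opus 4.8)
The plan is to rewrite $F(t,c)$ into a manifestly positive form. Since $r_x(c)=\sum_{i=1}^c e_i$, I would first merge the two sums in \eqref{infond} over the common index $i$, obtaining
\begin{equation*}
F(t,c)=\sum_{i=1}^c e_i\left[\si(t,c)-\si(t,c,i)\right].
\end{equation*}
The key algebraic input is the decomposition of the elementary symmetric polynomial $\si(t,c)$ according to whether a given variable $\sigma_i$ is used or not. This is exactly the identity \eqref{sigma1}, but with the distinguished index $c$ replaced by an arbitrary $i$; by the symmetry of $\si(t,c)$ in $\sigma_1,\dots,\sigma_c$ (equivalently, relabeling the indices and using \eqref{sigma3}) it reads
\begin{equation*}
\si(t,c)=\si(t,c,i)+\sigma_i\,\si(t-1,c,i),\qquad i=1,\dots,c.
\end{equation*}

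Substituting this into the previous display collapses the bracket and gives the clean formula
\begin{equation*}
F(t,c)=\sum_{i=1}^c e_i\,\sigma_i\,\si(t-1,c,i).
\end{equation*}
From here positivity is immediate. Each $\sigma_i=\beta_i^{\epsilon}>0$ since $\beta_i>0$ and $\epsilon=1/\gamma>0$, and for $1\le t\le c-1$ the quantity $\si(t-1,c,i)$ is a nonempty sum of products of strictly positive $\sigma$'s over the $c-1$ indices different from $i$ (note $0\le t-1\le c-2$, so the degree is admissible and strictly below the top degree, and the edge case $t=1$ uses the convention $\si(0,c,i)=1$ from \eqref{zero}); hence $\si(t-1,c,i)>0$. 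Because the endowments satisfy $e_i\ge 0$ with $\sum_{i=1}^c e_i=r_x(c)>0$, at least one $e_i$ is strictly positive, so $F(t,c)$ is a sum of nonnegative terms containing at least one strictly positive term, whence $F(t,c)>0$.

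The only genuinely nontrivial point is the justification of the displayed decomposition for an arbitrary omitted index $i$ rather than the specific index $c$ treated in \eqref{sigma1}; I expect this to require nothing more than invoking the symmetry of $\si(t,c)$ together with \eqref{sigma3}. Everything else is bookkeeping with the conventions \eqref{zero}--\eqref{zerogood} to control the boundary values of $t$, confirming in particular that the excluded value $t=c$ (which would force the degenerate term $\si(c-1,c,i)$) never arises in the stated range.
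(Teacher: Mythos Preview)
Your proof is correct, but it takes a different route from the paper's. The paper argues by induction on $c$: after verifying the base case $c=2$ directly, it uses the recurrences \eqref{sigma1}--\eqref{sigma3} to obtain
\[
F(t,c)=F(t,c-1)+\sigma_c\,F(t-1,c-1)+\sigma_c\,e_c\,\si(t-1,c-1),
\]
and concludes by the inductive hypothesis. You instead produce the closed formula $F(t,c)=\sum_{i=1}^c e_i\,\sigma_i\,\si(t-1,c,i)$ in one stroke, from which positivity is immediate. Your argument is shorter and more transparent, and it sidesteps the boundary bookkeeping (the cases $t=1$ and $t=c-1$) that the induction must handle implicitly. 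The only ingredient you need beyond what is written in the paper is the identity $\si(t,c)=\si(t,c,i)+\sigma_i\,\si(t-1,c,i)$ for arbitrary $i$, and you are right that this is just \eqref{sigma1} combined with \eqref{sigma3} after relabelling by the symmetry of $\si(t,c)$. Both proofs rely on the same implicit positivity of total endowments (some $e_i>0$, i.e.\ $r_x(c)>0$), which the paper also uses without comment in its base case.
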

\begin{proof}
We work by induction on $c\geq 2$ for all  $t$ such that $1\leq t\leq c-1$.
The base on the induction is immediate:
\begin{align*}
F(1, 2)&=(e_1+e_2)\sigma (1, 2)-e_1\sigma (1, 2, 1)-e_2\sigma (1, 2, 2)\\
&=(e_1+e_2)(\sigma_1+\sigma_2)-e_1\sigma_2-e_2\sigma_1\\
&=e_1\sigma_1+e_2\sigma_2>0\\
F( 2, 2)&=(e_1+e_2)\sigma (2, 2)-e_1\sigma (2, 2, 1)-e_2\sigma (2, 2, 2)=(e_1+e_2)\sigma_1\sigma_2 >0.\\
\end{align*}
Assume now, by the induction assumption, that 
$$
F(t, c-1)>0
$$
for each integer  $1\leq t\leq c-2$.
By \eqref{sigma1}, \eqref{sigma2} and  \eqref{sigma3} equation \eqref{infond} reads  as
$$F(t, c)=r_x(c-1)\si(t, c-1)+\sigma_cr_x(c)\si (t-1, c-1)-\sum_{j=1}^{c-1}e_j\left[\si(t, c-1, j)+\sigma_c \si (t-1, c-1, j)\right],$$
which can be rewritten as
$$F(t, c)=F(t, c-1)+\sigma_cF(t-1, c-1)+\sigma_ce_c\si(t-1, c-1),$$
which is strictly positive by the induction assumption.
\end{proof}

The next lemma can only be applied to regular equilibra, being robust to sufficiently small perturbations. 
Let us denote by $N(\ve , \vf ,\epsilon,\vsigma, a, b, c)$  the (finite) cardinality of the set of regular equilibria 
of the aggregate excess demand \eqref{z1}, or equivalently \eqref{superz}, in the Hara case.

\begin{lemma}\label{lemmac} 
For every $\epsilon_0\in (0,1)$, there exist two natural numbers $m,\, n$, $0<m<n$, 
where $\frac{m}{n}$ is sufficiently close to $\epsilon_0$, such that
$$N(\ve , \vf ,\epsilon_0,\vsigma, a, b, c)\leq N(\ve,\vf, \frac{m}{n},\vsigma,a,  b, c),$$
for all
$\ve,\vf,\vsigma, a,  b, c$.
\end{lemma}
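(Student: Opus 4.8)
The plan is to combine the robustness of regular equilibria under perturbations of the exponent $\epsilon$ with the density of $\Q$ in $\R$. Fix $\epsilon_0\in(0,1)$ together with a parameter tuple $(\ve,\vf,\vsigma,a,b,c)$. By Proposition \ref{mainprop}, the regular equilibria are exactly the simple positive zeros of $p\mapsto z(\ve,\vf,p,\epsilon_0,\vsigma,a,b,c)$, i.e. the prices $p>0$ at which $z=0$ and $\partial_p z\neq 0$; there are finitely many, say $p_1<\cdots<p_N$ with $N=N(\ve,\vf,\epsilon_0,\vsigma,a,b,c)$. The analytic input I would use is that each power $p^{t(\epsilon-1)}$ and $p^{t(\epsilon-1)+1}$ appearing in \eqref{superz} is jointly smooth in $(p,\epsilon)$ on $\{p>0,\ \epsilon\in(0,1)\}$, so that $z$ and $\partial_p z$ depend continuously on $\epsilon$, uniformly on compact $p$-intervals.

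Next I would prove that each simple zero persists. Since $\partial_p z(p_k,\epsilon_0)\neq 0$, the implicit function theorem applied to $z(p,\epsilon)=0$ at $(p_k,\epsilon_0)$ yields, for $\epsilon$ in a neighbourhood of $\epsilon_0$, a unique nearby zero $p_k(\epsilon)$ depending continuously on $\epsilon$ and still satisfying $\partial_p z(p_k(\epsilon),\epsilon)\neq 0$, hence again a regular equilibrium with $p_k(\epsilon)>0$. Shrinking the neighbourhoods of $p_1,\dots,p_N$ so that they are pairwise disjoint and intersecting the finitely many resulting $\epsilon$-neighbourhoods, I obtain $\eta>0$ such that $z(\cdot,\epsilon)$ has at least $N$ distinct regular positive zeros for every $\epsilon\in(\epsilon_0-\eta,\epsilon_0+\eta)$. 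In other words $N(\ve,\vf,\epsilon_0,\vsigma,a,b,c)\leq N(\ve,\vf,\epsilon,\vsigma,a,b,c)$ for all such $\epsilon$: the count of regular equilibria is lower semicontinuous in $\epsilon$. The inequality can only run this way, since a small perturbation may create new equilibria or split a critical one into regular zeros, but it cannot annihilate a transversal zero.

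With $\eta$ in hand, density of $\Q\cap(0,1)$ lets me choose a rational $\frac{m}{n}\in(\epsilon_0-\eta,\epsilon_0+\eta)$ with $0<m<n$, giving $N(\ve,\vf,\epsilon_0,\vsigma,a,b,c)\leq N(\ve,\vf,\frac{m}{n},\vsigma,a,b,c)$, which is the claimed bound. The delicate point, and the part I expect to be the main obstacle, is the order of the quantifiers: the radius $\eta$ produced above depends on the tuple $(\ve,\vf,\vsigma,a,b,c)$ --- indeed it shrinks as a regular equilibrium becomes \emph{almost} critical --- whereas the statement asks for one $\frac{m}{n}$ serving all parameters at once. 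I would resolve this in the way actually needed for Theorem \ref{mainteor}: the argument above is run for each fixed tuple, furnishing its own rational witness $\frac{m}{n}$ arbitrarily close to $\epsilon_0$, and the subsequent Descartes-type bound on $N(\ve,\vf,\frac{m}{n},\vsigma,a,b,c)$ is uniform in the parameters; combining the two then yields the desired upper bound on $N(\ve,\vf,\epsilon_0,\vsigma,a,b,c)$ for every parameter tuple, which is all that the proof of uniqueness requires.
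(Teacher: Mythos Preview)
Your approach is essentially the same as the paper's: fix the parameters, use that transversal (regular) zeros of $p\mapsto z$ persist under small perturbations of $\epsilon$, and then invoke the density of $\Q$ in $\R$ to replace $\epsilon_0$ by a nearby rational $\frac{m}{n}$. The paper's proof is a terse version of exactly this, phrased in terms of the graph of $z(\cdot,\epsilon_0)$ transversally meeting the $p$-axis; your write-up simply makes the implicit-function-theorem step and the lower-semicontinuity of $N$ in $\epsilon$ explicit. You also correctly flag the quantifier issue --- that the $\eta$, and hence the rational $\frac{m}{n}$, depends on the parameter tuple --- and resolve it exactly as the paper implicitly does, namely by noting that the downstream polynomial bound is uniform in the parameters, so a tuple-by-tuple choice of $\frac{m}{n}$ suffices for the application to Theorem~\ref{mainteor}.
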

\begin{proof}

For fixed $(\ve,\vf,\vsigma, a,  b, c)$, the aggregate excess demand function \eqref{superz} 
continuously depends on the price $p$ and the parameter $\epsilon$. 
Notice that if $p_0$ is a regular equilibrium of the function
of one variable
$z\left(\ve, \vf,p,\epsilon_0,\vsigma,a,  b, c\right)$,
then its graph transversally intersects
the $p$ axis in the  $p-z$ plane. 
The same property holds for a small perturbation of $\epsilon_0$.
Hence the result follows by the density of $\Q$ in $\R$.
 \end{proof}

This lemma points out that the number of regular equilibra cannot decrease
after the perturbation. The following figure provides an insight into the previous lemma. The black curve represents the aggregate  excess demand curve for a given $\eps$. 
The perturbation induced in the original curve by replacing  $\eps$ with $\frac{m}{n}$ is represented by the red curve. 
In this case the number of regular equilibria is such that
$$3=N(\ve , \vf ,\epsilon,\vsigma, a, b, c)\leq N(\ve,\vf, \frac{m}{n},\vsigma,a,  b, c)=5.$$
%As Figure \ref{figz} shows, the number of regular equilibria cannot decrease after the perturbation.}
Observe that if the perturbation had  \lq\lq opposite direction",
the number of regular equilibria would remain unchanged.

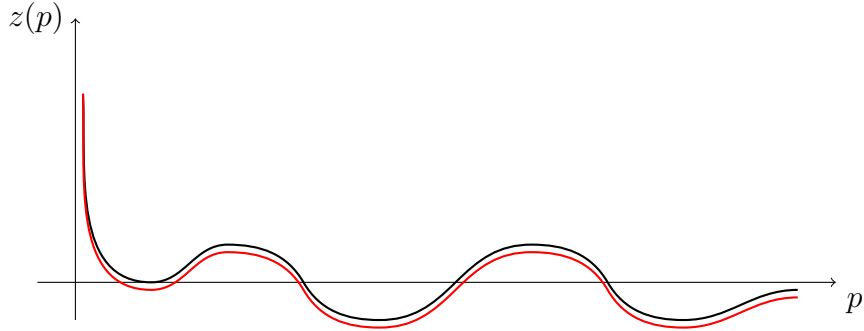
\begin{figure}[h]
\begin{tikzpicture}
\centering
\draw[thick,color=black,step=.5cm,dashed] (-0.5,-.5); %grid (3,3);
\draw[->] (-0.5,0) -- (10,0)
node[below right] {$p$};
\draw[->] (0,-0.5) -- (0,3.5)
node[left] {$z(p)$};
\draw [thick, smooth] (0.1,2.5)  to[out=-88, in=180] (1,0)  
to[out=0, in=180] (2,0.5) to[out=0, in=120](3,0) 
to[out=-60, in=180](4,-0.5) to[out=0, in=225](5,0) 
to[out=45, in=180](6,0.5) to[out=0, in =120](7,0) 
to[out=-60, in =180](8,-0.5) to[out=0, in =180](8,-0.5) 
to[out=0, in =180](9.5,-0.1); 
\draw [thick, red, smooth] (0.1,2.5)  to[out=-88, in=180] (1,-0.1) 
to[out=0, in=180] (2,0.4) to[out=0, in=120](3,-0.1) 
to[out=-60, in=180](4,-0.6) to[out=0, in=225](5,-0.1) 
to[out=45, in=180](6,0.4) to[out=0, in =120](7,-0.1)
 to[out=-60, in =180](8,-0.6) to[out=0, in =180](8,-0.6) 
 to[out=0, in =180](9.5,-0.2); 
\end{tikzpicture}
\caption{Aggregate excess demand after replacing $\eps$ with $\frac{m}{n}$.}
\label{figz}
\end{figure}

\vspace{1cm}

We are now in position to prove Theorem \ref{mainteor}.

\begin{proof}[Proof of  Theorem \ref{mainteor}]
By inserting  $\epsilon=\frac{m}{n}$, rational number, as in Lemma \ref{lemmac} into \eqref{superz} and denoting $q=p^{\frac{1}{n}}$,
we deduce that 
\begin{equation}\label{superzz}
\begin{split}
z(\ve, \vf,q,\frac{m}{n},\vsigma, a,  b, c)&:=-\sigma_{1\dots c}(r_x(c)+\frac{cb}{a\epsilon})q^{c(m-n)+n}\\ 
&-\sum_{t=1}^{c-1}\left[\xi(\ve, \vsigma, c,   t)+u(\vsigma, a, b, c,  t)\right]q^{t(m-n)+n}\\
&+\sum_{t=1}^{c-1}v(\vf, \vsigma, a, b, c,  t)q^{t(m-n)}+r_y+\frac{cb}{a\epsilon}
\end{split}
\end{equation}

We claim that for $\frac{m}{n} \in \left[\frac{c-1}{c}, 1 \right)$ and for $\ve, \vf,\vsigma,a,  b, c$ arbitrarly chosen,  there exists  a unique  $q_0>0$ which is a zero of the function
\begin{equation}\label{zmod}
z(\ve, \vf,q,\frac{m}{n},\vsigma,a,  b, c), 
\end{equation}
i.e. $z(\ve, \vf,q_0,\frac{m}{n},\vsigma, a, b, c)=0$.
Then,  by applying Lemma \ref{lemmac},
we would deduce that $p_0=q_0^n$ is the only positive solution of  \eqref{superz} for $\eps \in [\frac{c-1}{c}, 1 )$ , i.e. we have unicity of price equilibrium in 
the HARA economy under the assumption
\eqref{assgamma}, thus arriving at a conclusion of the proof of the theorem.

In order to prove the claim,  fix $\ve, \vf,\vsigma, a, b, c$ and set
$$\mu_c:=\sigma_{1\dots c}(r_x(c)+\frac{cb}{a\epsilon}), \  \mu_t:=\xi_t + u_t,\ v_t, \ v_0=r_y+\frac{cb}{a\epsilon},$$
where for  $t=1, \dots c-1$ we define
$$\xi_t:=\xi(\ve, \vsigma,  c, t),\  u_t:=u(\vsigma, a, b, c,  t),\  v_t:=v(\vf, \vsigma, a, b, c, t).$$
By multiplying equation \eqref{zmod} by the monomial $q^{(c-1)(n-m)}$ ($n>m$) and using \eqref{superzz},
one sees that there exists a positive zero of 
\eqref{zmod} if and only if the following polynomial in the variable $q$ has a unique positive root:

\begin{equation}\label{poli}
\begin{split}
P(q)&:=v_{c-1}+v_{c-2} q^{n-m}+\dots +v_1q^{(c-2)(n-m)}+v_0q^{(c-1)(n-m)}\\
&-\mu_cq^m-\mu_{c-1}q^n-\dots -\mu_1q^{(c-2)(n-m)+n}
\end{split}
\end{equation}

Notice now that by the very definition of the symbols involved, $\mu_c$, $\nu_0$, $u_t$ and $v_t$ are strictly positive real numbers for all $t=1, \dots ,c-1$.
By  applying Lemma \ref{lemmafond},  we also deduce that $\xi_t$ is strictly positive and hence 
$\mu_t>0$ for all $t=1, \dots ,c-1$. Since the assumption $\frac{m}{n} \in [\frac{c-1}{c}, 1 )$ is equivalent to $(c-1)(n-m)\leq m$, 
it follows that the monomials appearing in $P(q)$
are written in  increasing order. Thus, by  Descartes' rule of sign, the polynomial $P(q)$ has a unique positive root and the theorem follows.
\end{proof}

\section{The case of two consumers}\label{crrasec}

In this  section we the study the number of equilibria in the  case of $c=2$ consumers.
Two immediate implications  of Theorem \ref{mainteor} are the following corollaries.
\begin{corollary}\label{gewaq}
In the HARA case with two goods, if  $\gamma$ belongs to
the interval $(1,2]$, then the equilibrium price is unique.
\end{corollary}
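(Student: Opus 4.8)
The plan is to derive Corollary \ref{gewaq} as a direct specialization of Theorem \ref{mainteor} to the case $c=2$. First I would recall the statement of Theorem \ref{mainteor}: uniqueness of equilibrium holds whenever $\gamma\in\left(1,\frac{c}{c-1}\right]$. The entire content of the corollary is the observation that when the number of impatience types is $c=2$, the interval in \eqref{assgamma} becomes
\begin{equation*}
\left(1,\frac{c}{c-1}\right]=\left(1,\frac{2}{2-1}\right]=\left(1,2\right].
\end{equation*}
Thus the hypothesis $\gamma\in(1,2]$ of the corollary is precisely the hypothesis of Theorem \ref{mainteor} evaluated at $c=2$.

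Second, I would note that the economy under consideration in the corollary is exactly the HARA economy of Section \ref{harasec} with two goods and $c=2$ impatience types, so that all the standing assumptions of Theorem \ref{mainteor} (the HARA Bernoulli utility \eqref{uH} and the additively separable preferences \eqref{addsep}) are in force. No additional restriction on endowments, patience parameters, or the HARA constants $a,b$ is imposed, consistently with the scope of Theorem \ref{mainteor}.

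Consequently, applying Theorem \ref{mainteor} with $c=2$ immediately yields that the equilibrium price is unique, which is the assertion of the corollary. There is essentially no obstacle here: the only nontrivial step is verifying the arithmetic identity $\frac{c}{c-1}=2$ at $c=2$, and the rest is a citation of the already-established main theorem. The value of stating this as a separate corollary is expository rather than technical, since it isolates the two-consumer case and connects the parameter range $(1,2]$ to the open question of Toda and Walsh \cite{towa} discussed in the introduction.
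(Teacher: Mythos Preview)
Your proposal is correct and matches the paper's approach exactly: the paper does not give a separate proof but simply states that this corollary is an ``immediate implication'' of Theorem~\ref{mainteor}, which is precisely your argument of specializing \eqref{assgamma} to $c=2$ so that $\frac{c}{c-1}=2$. One minor expository remark: the connection to the open question of Toda and Walsh~\cite{towa} is in the paper attached to Corollary~\ref{todaq} (the CRRA subcase), not to this HARA corollary.
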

\begin{corollary}\label{todaq}
In the CRRA case with two goods, if the relative risk aversion $\gamma$ belongs to
the interval $(1,2]$, then the equilibrium price is unique.
\end{corollary}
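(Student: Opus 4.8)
The plan is to obtain Corollary \ref{todaq} as an immediate specialization of Theorem \ref{mainteor} (equivalently, of Corollary \ref{gewaq}), with no new computation required. The starting observation is that the CRRA class is exactly the $b=0$ slice of the HARA family \eqref{uH}: setting $b=0$ in $u_H(x)=\frac{\gamma}{1-\gamma}\bigl(b+\frac{a}{\gamma}x\bigr)^{1-\gamma}$ gives $\frac{\gamma}{1-\gamma}\bigl(\frac{a}{\gamma}\bigr)^{1-\gamma}x^{1-\gamma}$, which up to a positive multiplicative constant is the standard CRRA Bernoulli utility and therefore represents the same preferences. So I would treat CRRA not as a separate model but as the boundary value $b=0$ of the HARA setting already analyzed.

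First I would fix the parameter count: with two impatience types we are in the case $c=2$, for which the admissible upper endpoint in Theorem \ref{mainteor} is $\frac{c}{c-1}=2$. Thus Theorem \ref{mainteor}, applied at $c=2$, already yields uniqueness of the equilibrium price for every $\gamma\in(1,2]$ and for all admissible parameters $a>0$, $b\ge 0$, together with arbitrary $\vsigma,\ve,\vf$. Since CRRA corresponds to $b=0$, the conclusion is obtained simply by reading Corollary \ref{gewaq} at $b=0$, and I would phrase the proof as exactly this one-line invocation.

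The only point deserving a line of justification is that the $b=0$ specialization does not degrade the sign-variation bookkeeping underlying the proof of Theorem \ref{mainteor}, so I would verify this directly rather than reprove the whole argument. Setting $b=0$ kills the terms $u_t=\frac{b}{a\epsilon}\sum_i\sigma_i\,\si(t-1,c,i)$, but the coefficient multiplying the relevant monomial in \eqref{poli} equals $\mu_t=\xi_t+u_t=\xi_t$, which remains strictly positive by Lemma \ref{lemmafond}; likewise $\mu_c=\sigma_{1\dots c}\,r_x(c)>0$, $v_0=r_y(c)>0$, and each $v_t=\sum_i f_i\,\si(t,c,i)>0$ because $\sum_i f_i=r_y(c)>0$ and the symmetric functions $\si(t,c,i)$ are positive. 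Hence at $b=0$ the polynomial $P(q)$ of \eqref{poli} still has its monomials arranged in increasing order with the $v$-coefficients positive and the $\mu$-coefficients negative, i.e. exactly one sign variation, so Descartes' rule again forces a unique positive root. I expect no genuine obstacle here: the specialization is essentially bookkeeping, and the delicate content (positivity of the $\xi_t$ via Lemma \ref{lemmafond}, the transversality/density argument of Lemma \ref{lemmac}, and the increasing-order condition $(c-1)(n-m)\le m$) was already established uniformly in $b\ge 0$.
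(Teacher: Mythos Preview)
Your proposal is correct and matches the paper's own treatment: the paper simply presents Corollary~\ref{todaq} as an immediate implication of Theorem~\ref{mainteor} in the $c=2$ section, with no separate argument. Your additional verification that the sign analysis survives the specialization $b=0$ is harmless but unnecessary, since Theorem~\ref{mainteor} is already stated and proved uniformly for all $b\ge 0$.
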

This last corollary answers the question left open by Toda and Walsh \cite[Remark 1]{towa}
 of whether multiplicity is possibile or not
for $\gamma\leq 2$ (see Subsection \ref{CRRApart} below for a deeper analysis of Toda-Walsh example).

In the sequel, we are interested to analyze the number of equilibria in the HARA case when $\gamma =3$.

By the proof of Theorem \ref{mainteor}, we have reduced our investigation to the analysis of the zeros  of the polynomial 
\eqref{poli} when $c=2$, $m=3$, $n=1$ ($\eps=\frac{1}{3}$). Namely, by \eqref{xi}, \eqref{u} and  \eqref{v}, one has
\begin{equation}\label{polic=2}
P(q)=A(\ve, \vsigma, a, b)q^3+B(\vf, \vsigma, a, b)q^2+C(\ve, \vsigma, a, b)q+D(\vf, \vsigma, a, b),
\end{equation}
where 
\begin{equation}\label{ABCD}
\begin{alignedat}{3}
A(\ve, \vsigma, a, b):=&-\mu_1= -(e_1\sigma_1+e_2\sigma_2)-\frac{3b}{a}(\sigma_1+\sigma_2),\\
B(\vf, \vsigma, a, b):=&v_0= f_1+f_2+\frac{6b}{a},\\ 
C(\ve, \vsigma, a, b):=&-\mu_2= -(e_1+e_2)\sigma_1\sigma_2-\frac{6b}{a}\sigma_1\sigma_2, \\ 
D(\vf, \vsigma, a, b):=&v_1= f_1\sigma_2+f_2\sigma_1+\frac{3b}{a}(\sigma_1+\sigma_2).
\end{alignedat}
\end{equation}

In order to prove Theorem \ref{mainteor2}, we need  a simple  algebraic lemma.
\begin{lemma}\label{lemmapol}
Assume that the polynomial 
$P(x)=Ax^3+Bx^2+Cx+D$ has three sign changes and $ABCD\neq 0$.
If  
\begin{equation}\label{assABCD}
 AD-BC<0,
\end{equation}
then $P(x)$ has a unique positive root.
\end{lemma}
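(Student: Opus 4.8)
The plan is to combine Descartes' rule of signs with Vieta's formulas. Since $ABCD\neq 0$, all four coefficients are nonzero, and the hypothesis of three sign changes forces the signs of $A,B,C,D$ to strictly alternate. By Descartes' rule of signs, the number of positive roots of $P$, counted with multiplicity, is therefore either $3$ or $1$; in particular $P$ always has at least one positive root. It thus suffices to rule out the possibility of three positive roots, from which both existence and uniqueness of the positive root follow at once.

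To this end I would argue by contradiction. Suppose $P(x)=A(x-r_1)(x-r_2)(x-r_3)$ with $r_1,r_2,r_3>0$ (a triple of positive reals, possibly with repetitions). Writing $s_1=r_1+r_2+r_3$, $s_2=r_1r_2+r_2r_3+r_3r_1$, $s_3=r_1r_2r_3$ for the elementary symmetric functions, Vieta's formulas give $B=-As_1$, $C=As_2$, $D=-As_3$. A direct substitution then yields the identity
\begin{equation*}
AD-BC=A^2\left(s_1s_2-s_3\right).
\end{equation*}
Expanding the product shows that $s_1s_2-s_3=\sum_{i\neq j}r_i^2 r_j+2r_1r_2r_3$, which is a sum of positive terms whenever $r_1,r_2,r_3>0$. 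Since $A\neq 0$ forces $A^2>0$, we conclude $AD-BC>0$, contradicting the hypothesis \eqref{assABCD}. Hence three positive roots are impossible, and by the dichotomy furnished by Descartes' rule, $P$ has exactly one positive root.

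The computation is entirely elementary, so there is no serious analytic obstacle; the one point requiring care is logical rather than computational. Descartes' rule only bounds the \emph{count} of positive roots, so the Vieta identity must be invoked solely under the standing assumption that all three roots are positive reals. The identity $AD-BC=A^2(s_1s_2-s_3)$ itself holds for arbitrary (possibly complex) roots, but the strict positivity of $s_1s_2-s_3$ is precisely what is special to the all-positive case. Keeping this conditional structure explicit — Descartes narrows the count to $\{1,3\}$, and the sign of $AD-BC$ excludes the value $3$ — is the key to a clean argument.
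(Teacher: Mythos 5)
Your proof is correct, but it takes a genuinely different route from the paper's. The paper proves the lemma via the discriminant of the cubic, $\Delta=B^2C^2-4AC^3-4B^3D-27A^2D^2+18ABCD$: using the alternating signs (which make $-4AC^3<0$ and $-4B^3D<0$) together with the fact that \eqref{assABCD} forces $A^2D^2>B^2C^2$ and $A^2D^2>ABCD$, it concludes $\Delta<0$, so $P$ has a single real root, which is positive by Descartes' rule (or the intermediate value theorem). You instead take the dichotomy $\{1,3\}$ from Descartes and exclude the value $3$ by Vieta: if $r_1,r_2,r_3>0$ were the roots, then $AD-BC=A^2(s_1s_2-s_3)$ with $s_1s_2-s_3=\sum_{i\neq j}r_i^2r_j+2r_1r_2r_3>0$, contradicting \eqref{assABCD}. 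Your argument is more elementary and self-contained (no discriminant formula is needed), and the identity $AD-BC=A^2(s_1s_2-s_3)$ isolates exactly why the hypothesis is incompatible with three positive roots; it even shows that the weaker hypothesis $AD-BC\leq 0$ suffices. What the discriminant approach buys is a directly stronger structural statement: $\Delta<0$ says that $P$ has only one \emph{real} root, the other two being complex conjugates, whereas your argument a priori leaves open the possibility of additional negative real roots. Under the alternating-sign hypothesis this difference evaporates, since $P(-x)$ has no sign changes and hence $P$ has no negative roots, so the two proofs reach equivalent conclusions.
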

\begin{proof}
The discriminant of $P(x)$ is given by:
\begin{equation}\label{dis}
\Delta=B^2C^2-4AC^3-4B^3D-27A^2D^2+18ABCD.
\end{equation}
The assumption \eqref{assABCD} gives $A^2D^2>B^2C^2$ and $A^2D^2>ABCD$. 
Then $\Delta<0$. Hence the polynomial has a unique root which is positive by the Decartes' rule of signs (or by the mean value theorem).
\end{proof}
\begin{proof}[Proof of Theorem \ref{mainteor2}]
By Lemma \ref{lemmac}, one can assume that $\gamma=\frac{1}{\epsilon}=3$.
The boundedness of the number of equilibria is immediate, since the polynomial
\eqref{polic=2} has degree three.
In order to prove the second part of the theorem, 
by Lemma \ref{lemmapol} we need to verify that \eqref{prop5} and \eqref{bmag} imply 
$$A(\ve, \vsigma, a, b)D(\vf, \vsigma, a, b)-B(\vf, \vsigma, a, b)C(\ve, \vsigma, a, b)<0$$ 
where $A(\ve, \vsigma, a, b), B(\vf, \vsigma, a, b), C(\ve, \vsigma, a, b),  D(\vf, \vsigma, a, b)$ are given by \eqref{ABCD} above.
A long but  straightforward computation yields
\begin{equation}\label{dadim}
\begin{split}
A(\ve, \vsigma, a, b)D(\vf, \vsigma, a, b)-B(\vf, \vsigma, a, b)C(\ve, \vsigma, a, b)&=\\
&=(\sigma_2-\sigma_1)(e_1f_2\sigma_1-e_2f_1\sigma_2)\\&
+E(\ve,\vf, \vsigma, a, b),
\end{split}
\end{equation}

where 
$$E(\ve,\vf, \vsigma, a, b):=-\frac{9b^2}{a^2}\left(\sigma_1-\sigma_2\right)^2+\frac{3b}{a}\left[\left(e_1+e_2+f_1+f_2\right)\sigma_1\sigma_2-(e_1+f_2)\sigma_1^2 -(e_2+f_1)\sigma_2^2\right].$$

Notice that by \eqref{prop5}, the first summand of \eqref{dadim}, namely $(\sigma_2-\sigma_1)(e_1f_2\sigma_1-e_2f_1\sigma_2)$, is strictly less than zero:
$$(\sigma_2-\sigma_1)(e_1f_2\sigma_1-e_2f_1\sigma_2)\leq (\sigma_2-\sigma_1)f_1(e_1\sigma_1-e_2\sigma_2)< (\sigma_2-\sigma_1)f_1\sigma_2(e_1-e_2)\leq 0.$$

Moreover, $E(\ve,\vf, \vsigma, a, b)\leq 0$ if and only if 
$$b\geq\frac{a}{3} \frac{\left[(e_1+e_2+f_1+f_2)\sigma_1\sigma_2-(e_1+f_2)\sigma_1^2 -(e_2+f_1)\sigma_2^2\right]}{\left(\sigma_1-\sigma_2\right)^2}.$$

Again by \eqref{prop5} one can find an upper bound of  the right hand side of this inequality, namely:

\begin{align*}
\frac{a}{3}\frac{\left[(e_1+e_2+f_1+f_2)\sigma_1\sigma_2-(e_1+f_2)\sigma_1^2 -(e_2+f_1)\sigma_2^2\right]}{\left(\sigma_1-\sigma_2\right)^2}& \\
&<\frac{a}{3}\frac{\left[(e_1+e_2+f_1+f_2)\sigma_1\sigma_2\right]}{\left(\sigma_1-\sigma_2\right)^2}\\&<
\frac{a}{3}(\frac{\sigma_2}{\sigma_1})^{2}(e_2+f_1).
\end{align*}

Thus, if \eqref{prop5} and   $b\geq \frac{a}{3}(\frac{\sigma_2}{\sigma_1})^{2}(e_2+f_1)$ hold true, we get that \eqref{dadim} is strictly less than zero and by Lemma \ref{lemmapol} there is uniqueness  of equilibria.
Hence the  proof of the theorem follows,  since  $\sigma_i=\beta_i^{\frac{1}{3}}$, $i=1, 2$.
\end{proof}

\begin{remark}\label{remarkprop5}
Conditions \eqref{prop5}
are exactly those of \cite[Proposition 5]{gewa}, which holds for CRRA (homotethic) preferences.
In fact, assuming only conditions \eqref{prop5} and CRRA preferences, one can follow
the same argument of the proof of Theorem \ref{mainteor2} and achieve Proposition 5 in a different way.
Hence Theorem \ref{mainteor2} encompasses Proposition 5 as a special case, when $\gamma=3$
(see Remark \ref{remarkresearch}).\footnote{We thank again Alexis Akira Toda
for suggesting us the analysis of the general case, arbitrary $\gamma$ with two consumers, which will be the subject of our future research.} 
Condition \eqref{bmag} enables us to extend the result to HARA preferences and appraise the connection in ensuring uniqueness between
endowments, patience, and Bernoulli utility parameters affecting the concavity of the Bernoulli utility function.
\end{remark}

%For  CRRA preferences, a special case of HARA, namely  $b=0$, 
%we get \eqref{polic=2}
%where 
%$$A:= -(e_1\sigma_1+e_2\sigma_2),\  B:= f_1+f_2$$ 
%$$C:= -(e_1+e_2)\sigma_1\sigma_2, \ D:= f_1\sigma_2+f_2\sigma_1.$$
%and we get

\subsection{The Toda-Walsh example with CRRA preferences}\label{CRRApart}
In this last part of the paper, we consider the example of multiple equilibria, provided by  Toda and Walsh \cite{towa},
of an economy with two goods and two consumers under the assumption of CRRA preferences and symmetric endowments.
Observe that this CRRA example can be obtained from HARA (see \eqref{uH}) by setting $b=0$
and 
\begin{equation}\label{beta}
\beta_1=(\frac{1-\alpha}{\alpha})^\gamma, \ \beta_2=(\frac{\alpha}{1-\alpha})^\gamma, \ 0 < \alpha < 1.
\end{equation} 
More precisely, preferences are given by

$$u_1(x_1,x_2)=\frac{1}{1-\gamma}(\alpha^\gamma x_1^{1-\gamma}+
(1-\alpha)^\gamma x_2^{1-\gamma})$$

$$u_2(x_1,x_2)=\frac{1}{1-\gamma}((1-\alpha)^\gamma x_1^{1-\gamma}+
\alpha^\gamma x_2^{1-\gamma})$$

and consumers' endowments are symmetric:
\begin{equation}\label{symm}
(e_1, f_1)=(e, 1-e) \ (e_2, f_2)=(1-e,e), \ 0<e<1.
\end{equation}

%Toda and Walsh's result can be summarized as follows \cite[Proposition 1]{towa}.

%\begin{thmTW}\label{thmTW}
%If $\eps:=\frac{1}{\gamma}<1-\frac{1}{2}\left(\frac{e}{\alpha}+\frac{1-e}{1-\alpha}\right)$, then the economy has at least three equilibria. 
%\end{thmTW}

In this setting we obtain the following result.

\begin{theorem}\label{mainteor3}
In the CRRA symmetric case  with $\gamma =3$
the following facts hold true:
\begin{enumerate}
\item
 there is uniqueness of equilibria if and only if $(\alpha -3e)(2\alpha-1)> 0$;
 \item
 there is critical  equilibria if and only if $\alpha =3e$, $\alpha =\frac{1}{2}$;
 \item
 there are three equilibria if and only if $(\alpha -3e)(2\alpha-1)< 0$.
\end{enumerate}
\end{theorem}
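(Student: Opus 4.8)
The plan is to specialise the cubic \eqref{polic=2}--\eqref{ABCD} to the symmetric CRRA data \eqref{beta}--\eqref{symm} and then read off the number of positive roots from its discriminant, exactly in the spirit of Lemma~\ref{lemmapol}. Since $\gamma=3$ gives $\epsilon=\tfrac13$, which is already rational, Lemma~\ref{lemmac} lets me work directly at $m=1$, $n=3$, $q=p^{1/3}$, with no approximation needed. Setting $b=0$ and inserting $\sigma_1=\tfrac{1-\alpha}{\alpha}$, $\sigma_2=\tfrac{\alpha}{1-\alpha}$ (so that $\sigma_1\sigma_2=1$) together with $e_1=e$, $f_1=1-e$, $e_2=1-e$, $f_2=e$ and $r_x=r_y=1$, the coefficients in \eqref{ABCD} collapse to $B=1$, $C=-1$ and, crucially, $A=-D$, where $D=e\sigma_1+(1-e)\sigma_2>0$.

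The relation $A=-D$ is the structural heart of the argument: it forces $P(1)=A+B+C+D=0$, so $q=1$ (the symmetric price $p=1$) is always an equilibrium, reflecting the good-swap symmetry $p\mapsto 1/p$ of this mirror economy. I would therefore factor
\begin{equation*}
P(q)=(q-1)\,Q(q),\qquad Q(q)=-D\,q^{2}+(1-D)\,q-D ,
\end{equation*}
so that the two remaining candidate equilibria are the roots of $Q$, which are reciprocal to one another (their product equals $1$). The sign pattern of $P$ is $(-,+,-,+)$, with three sign changes, while $P(-q)$ has all-positive coefficients; by Descartes' rule of signs $P$ has neither a negative nor a zero root, so every real root is a genuine positive price. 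Consequently the equilibrium count is governed entirely by the real roots of $Q$: a unique equilibrium (just $q=1$) when $Q$ has no real root, three equilibria when $Q$ has two distinct positive roots, and a repeated, hence critical (non-transversal), equilibrium when the discriminant of $Q$ vanishes.

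The final step is to translate the sign of the discriminant into the data $(\alpha,e)$. Computing $\Delta$ from \eqref{dis} in terms of $D$ (equivalently the discriminant of the reduced quadratic $Q$) and substituting $D=e\tfrac{1-\alpha}{\alpha}+(1-e)\tfrac{\alpha}{1-\alpha}$ reduces everything, up to a manifestly positive factor, to a single polynomial in $\alpha$ and $e$. The crux of the proof — and where all the bookkeeping lives — is to show that this polynomial has \emph{opposite} sign to the product $(\alpha-3e)(2\alpha-1)$, i.e. that $\Delta>0\Leftrightarrow(\alpha-3e)(2\alpha-1)<0$. Granting this identification, the trichotomy is immediate: $\Delta>0$ yields three distinct roots, all positive by Descartes, giving case (3); $\Delta<0$ yields a single positive root, giving case (1); and $\Delta=0$ yields a repeated root, a critical equilibrium, giving case (2) at $(\alpha-3e)(2\alpha-1)=0$, i.e. $\alpha=3e$ or $\alpha=\tfrac12$.

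I expect the main obstacle to be precisely this last algebraic reduction to the clean factored form $(\alpha-3e)(2\alpha-1)$: the raw discriminant is a quartic in $\alpha$ with $e$-dependent coefficients, and extracting the positive common factor and verifying the exact factorisation is the delicate point (a slip in the coefficient of $\alpha^{2}$ would change the multiplicity region). A secondary check is to confirm that the boundary case $\Delta=0$ genuinely produces a critical equilibrium rather than a regular one — this follows because at the repeated root $q_{0}$ one has $P'(q_{0})=0$, which, since $q=p^{1/3}$ is an orientation-preserving diffeomorphism of $(0,\infty)$ and the cleared denominator in \eqref{z1} is strictly positive, transfers to $Z'(p_{0})=0$, i.e. a non-transversal (critical) equilibrium of the excess demand.
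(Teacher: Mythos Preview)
Your plan coincides with the paper's: it too specialises \eqref{ABCD} to obtain $P(q)=-\delta q^{3}+q^{2}-q+\delta$ with $\delta(\alpha,e)=\dfrac{\alpha^{2}-(2\alpha-1)e}{\alpha(1-\alpha)}$ (your $D$), and reads off the trichotomy from the sign of the discriminant, the only cosmetic difference being that the paper computes the cubic discriminant $\Delta=-(3\delta-1)^{3}(\delta+1)$ directly rather than first factoring out your $(q-1)$, which of course gives the same threshold $\delta=\tfrac13$. The final identification of $\delta\lessgtr\tfrac13$ with the sign of $(\alpha-3e)(2\alpha-1)$, which you correctly flag as the crux, is asserted in a single line in the paper.
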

\begin{proof}
In order to study the equilibria prices, we have to analyze the roots of \eqref{polic=2}
in this particular case.
Substituting \eqref{beta} and \eqref{symm} into \eqref{ABCD}, we get
$$P(q)=-\delta (\alpha, e)q^3+q^2-q+ \delta (\alpha, e),$$
where
$$\delta (\alpha, e):=\frac{\alpha^2-(2\alpha-1)e}{\alpha-\alpha^2}, \ 0<\alpha, e<1.$$
The  discriminant of this polynomial  (cfr. \eqref{dis}) is
$$\Delta =-(3\delta (\alpha, e)-1)^3(\delta (\alpha, e)+1).$$
Thus the result follows by noticing that $\Delta \geq 0 (<0)$ iff and only if $\delta (\alpha , e )\leq \frac{1}{3} (>\frac{1}{3})$, 
i.e. $(\alpha -3e)(2\alpha-1)\leq 0 (>0)$.
\end{proof}

Theorem \ref{mainteor3} complements \cite[Proposition 1]{towa}  by providing necessary and sufficient
conditions for uniqueness in the CRRA economy with relative risk aversion $\gamma=3$. 
It establishes a connection between the number of equilibria, their type (regular and critical),
endowments and utility weights.

We finally notice that for $\alpha =\frac{1}{7}$ and  $e=\frac{1}{49}$, one has $\delta (\alpha, e)=\frac{2}{7}$ and the polynomial $P(q)$
becomes
$-\frac{2}{7}q^3+q^2-q+\frac{2}{7}$ which has three  solutions, $\{\frac{1}{2},1,2\}$, which correspond, since $p=q^3$, 
to $\{\frac{1}{8}, 1, 8\}$ in accordance with \cite{towa}'s example.

\begin{remark}\label{remarkresearch}
Theorem \ref{mainteor3} has shown how this polynomial approach makes it easier to deal with the particular case $\gamma =3$.
We believe that this method deserves future investigation since it can be used to tackle the general case of $c$ consumers under CRRA and HARA 
preferences and arbitrary $\gamma$. 
\end{remark}


\begin{thebibliography}{99}

\bibitem[Anderson et al(1998)]{ajs}
Anderson, B., Jackson, J., and Sitharam, M., 1998,
Descartes' Rule of Signs Revisited,
The American Mathematical Monthly,
105, 5, 447-451.




\bibitem[Edwards, H. M., 1984]{edwa}
Edwards, Harold M., 1984,
Galois Theory. Springer-Verlag, Cambridge, New York.

\bibitem[Geanakoplos and Walsh(2018)]{gewa} 
Geanakoplos, J.,  Walsh, K. J., 2018, 
Uniqueness and stability of equilibrium in economies with two goods, 
Journal of Economic Theory, 174, 261-272.

\bibitem[Gimemez(2019)]{gime} 
Gimémez,  E. L., 2019, 
Offer curves and Uniqueness of competitive equilibrium, 
Ecobas Working Paper,
\url{http://ecobas.webs.uvigo.es/wk/2019-02_Gimenez_Offer-curves-and-uniqueness-of-competitive-equilibrium-ECOBAS.pdf}

\bibitem[Hens and Loeffler(1995)]{helo} 
Hens, T., Loeffler, A., 1995, 
Gross substitution in financial markets, Econ. Lett. 49(1), 39- 43.

\bibitem[Kaplow L., 2008]{kap}
Kaplow, Louis,  2008, The Theory of Taxation and Public Economics. 
Princeton University Press.

\bibitem[Kehoe(1998]{ke98}
Kehoe, Timothy J., 1998,
Uniqueness and stability. 
In: Kirman, Alan (Ed.), Elements of General Equilibrium Analysis. Wiley–Blackwell, pp. 38–87 (Chapter 3).

\bibitem[Kubler et al.(2014)]{krs}

Kubler, F., Renner P., and Schmedders, K., 2014,
Computing All Solutions to Polynomial Equations in Economics.
In: Schmedders, K. and Judd, K. L.,  (Eds), Handbook of Computational Economics, volume 3, pp. 599-652 (Chapter 11).


 

\bibitem[Loi and Matta(2018)]{lm18}
Loi, A., Matta, S., 2018
Curvature and uniqueness of equilibrium, 
Journal of Mathematical Economics 74 , 62-67.

\bibitem[Loi and Matta(2021)]{lm21}
Loi, A., Matta, S,
Minimal entropy and uniqueness of price equilibria in a pure exchange economy, 2021,
arXiv:2102.09827 [econ.TH].

\bibitem[Mas-Colell, Andreu, 1991]{msc91}
Mas-Colell, Andreu, 1991,
On the uniqueness of equilibrium once again. In: Barnett, William A., Cornet, Bernard,
D’Aspermont, Claude, Gabszewicz, Jean, Mas-Colell, Andreu (Eds.), Equilibrium Theory and Applications. 
Cambridge University Press, pp. 275–296 (Chapter 12).

\bibitem[Mas-Colell, A. et al., 1995]{msc95}
Mas-Colell, A., Whinston, Michael D., Green, J. R., 1995,
Microeconomic Theory. Oxford University Press.


\bibitem[Toda and Walsh(2017)]{towa} 
Toda, A. A., Walsh, K. J., 2017, 
Edgeworth box economies with multiple equilibria, 
Econ. Theory Bull., 5, 65-80.






%%%%%%%%%%%%%%%%%%%%%%%%%%%%%%%%%%%%%%%%%%%%%%%%%%%%%%%%%%%%



\end{thebibliography}
\end{document}